\newtheorem{definition}{Definition}
\newtheorem{proposition}[definition]{Proposition}
\newtheorem{lemma}[definition]{Lemma}
\newtheorem{theorem}[definition]{Theorem}
\newtheorem{corollary}[definition]{Corollary}
\newtheorem{conjecture}[definition]{Conjecture}
\newtheorem{remark}[definition]{Remark}
\newtheorem{example}[definition]{Example}
\newtheorem{question}[definition]{Question}
\newtheorem{memo}[definition]{Memo}
\def\squareforqed{\hbox{\rlap{$\sqcap$}$\sqcup$}}
\def\qed{\ifmmode\squareforqed\else{\unskip\nobreak\hfil
\penalty50\hskip1em\null\nobreak\hfil\squareforqed
\parfillskip=0pt\finalhyphendemerits=0\endgraf}\fi}
\def\endenv{\ifmmode\;\else{\unskip\nobreak\hfil
\penalty50\hskip1em\null\nobreak\hfil\;
\parfillskip=0pt\finalhyphendemerits=0\endgraf}\fi}
\newenvironment{proof}{\noindent \textbf{{Proof.~} }}{\qed}
\def\Dbar{\leavevmode\lower.6ex\hbox to 0pt
{\hskip-.23ex\accent"16\hss}D}
\def\url@leostyle{%
  \@ifundefined{selectfont}{\def\UrlFont{\sf}}{\def\UrlFont{\small\ttfamily}}}
\def\bcj{\begin{conjecture}}
\def\ecj{\end{conjecture}}
\def\bcr{\begin{corollary}}
\def\ecr{\end{corollary}}
\def\bd{\begin{definition}}
\def\ed{\end{definition}}
\def\bea{\begin{eqnarray}}
\def\eea{\end{eqnarray}}
\def\bem{\begin{enumerate}}
\def\eem{\end{enumerate}}
\def\bex{\begin{example}}
\def\eex{\end{example}}
\def\bim{\begin{itemize}}
\def\eim{\end{itemize}}
\def\bl{\begin{lemma}}
\def\el{\end{lemma}}
\def\bma{\begin{bmatrix}}
\def\ema{\end{bmatrix}}
\def\bpf{\begin{proof}}
\def\epf{\end{proof}}
\def\bpp{\begin{proposition}}
\def\epp{\end{proposition}}
\def\bqu{\begin{question}}
\def\equ{\end{question}}
\def\br{\begin{remark}}
\def\er{\end{remark}}
\def\bt{\begin{theorem}}
\def\et{\end{theorem}}
\def\bmm{\begin{memo}}
\def\emm{\end{memo}}
\def\btb{\begin{tabular}}
\def\etb{\end{tabular}}
\newcommand{\nc}{\newcommand}
\def\g{\gamma}
\def\r{\rho}
\def\s{\sigma}
\def\ps{\psi}
\def\G{\Gamma}
\def\L{\Lambda}
\def\Ps{\Psi}
 \nc{\bbA}{\mathbb{A}} \nc{\bbB}{\mathbb{B}} \nc{\bbC}{\mathbb{C}}
 \nc{\bbD}{\mathbb{D}} \nc{\bbE}{\mathbb{E}} \nc{\bbF}{\mathbb{F}}
 \nc{\bbG}{\mathbb{G}} \nc{\bbH}{\mathbb{H}} \nc{\bbI}{\mathbb{I}}
 \nc{\bbJ}{\mathbb{J}} \nc{\bbK}{\mathbb{K}} \nc{\bbL}{\mathbb{L}}
 \nc{\bbM}{\mathbb{M}} \nc{\bbN}{\mathbb{N}} \nc{\bbO}{\mathbb{O}}
 \nc{\bbP}{\mathbb{P}} \nc{\bbQ}{\mathbb{Q}} \nc{\bbR}{\mathbb{R}}
 \nc{\bbS}{\mathbb{S}} \nc{\bbT}{\mathbb{T}} \nc{\bbU}{\mathbb{U}}
 \nc{\bbV}{\mathbb{V}} \nc{\bbW}{\mathbb{W}} \nc{\bbX}{\mathbb{X}}
 \nc{\bbZ}{\mathbb{Z}}
 \nc{\bA}{{\bf A}} \nc{\bB}{{\bf B}} \nc{\bC}{{\bf C}}
 \nc{\bD}{{\bf D}} \nc{\bE}{{\bf E}} \nc{\bF}{{\bf F}}
 \nc{\bG}{{\bf G}} \nc{\bH}{{\bf H}} \nc{\bI}{{\bf I}}
 \nc{\bJ}{{\bf J}} \nc{\bK}{{\bf K}} \nc{\bL}{{\bf L}}
 \nc{\bM}{{\bf M}} \nc{\bN}{{\bf N}} \nc{\bO}{{\bf O}}
 \nc{\bP}{{\bf P}} \nc{\bQ}{{\bf Q}} \nc{\bR}{{\bf R}}
 \nc{\bS}{{\bf S}} \nc{\bT}{{\bf T}} \nc{\bU}{{\bf U}}
 \nc{\bV}{{\bf V}} \nc{\bW}{{\bf W}} \nc{\bX}{{\bf X}}
 \nc{\bZ}{{\bf Z}}
\nc{\cA}{{\cal A}} \nc{\cB}{{\cal B}} \nc{\cC}{{\cal C}}
\nc{\cD}{{\cal D}} \nc{\cE}{{\cal E}} \nc{\cF}{{\cal F}}
\nc{\cG}{{\cal G}} \nc{\cH}{{\cal H}} \nc{\cI}{{\cal I}}
\nc{\cJ}{{\cal J}} \nc{\cK}{{\cal K}} \nc{\cL}{{\cal L}}
\nc{\cM}{{\cal M}} \nc{\cN}{{\cal N}} \nc{\cO}{{\cal O}}
\nc{\cP}{{\cal P}} \nc{\cQ}{{\cal Q}} \nc{\cR}{{\cal R}}
\nc{\cS}{{\cal S}} \nc{\cT}{{\cal T}} \nc{\cU}{{\cal U}}
\nc{\cV}{{\cal V}} \nc{\cW}{{\cal W}} \nc{\cX}{{\cal X}}
\nc{\cZ}{{\cal Z}}
\nc{\hA}{{\hat{A}}} \nc{\hB}{{\hat{B}}} \nc{\hC}{{\hat{C}}}
\nc{\hD}{{\hat{D}}} \nc{\hE}{{\hat{E}}} \nc{\hF}{{\hat{F}}}
\nc{\hG}{{\hat{G}}} \nc{\hH}{{\hat{H}}} \nc{\hI}{{\hat{I}}}
\nc{\hJ}{{\hat{J}}} \nc{\hK}{{\hat{K}}} \nc{\hL}{{\hat{L}}}
\nc{\hM}{{\hat{M}}} \nc{\hN}{{\hat{N}}} \nc{\hO}{{\hat{O}}}
\nc{\hP}{{\hat{P}}} \nc{\hR}{{\hat{R}}} \nc{\hS}{{\hat{S}}}
\nc{\hT}{{\hat{T}}} \nc{\hU}{{\hat{U}}} \nc{\hV}{{\hat{V}}}
\nc{\hW}{{\hat{W}}} \nc{\hX}{{\hat{X}}} \nc{\hZ}{{\hat{Z}}}
\nc{\hn}{{\hat{n}}}
\def\diag{\mathop{\rm diag}}
\def\lin{\mathop{\rm span}}
\def\sn{\mathop{\rm SN}}
\def\dg{\dagger}
\def\ox{\otimes}
\def\ra{\rightarrow}
\newcommand{\bra}[1]{\langle#1|}
\newcommand{\ket}[1]{|#1\rangle}
\newcommand{\proj}[1]{| #1\rangle\!\langle #1 |}
\newcommand{\ketbra}[2]{|#1\rangle\!\langle#2|}
\def\Dbar{\leavevmode\lower.6ex\hbox to 0pt
{\hskip-.23ex\accent"16\hss}D}
\begin{document}
\title{On the PPT Square Conjecture for $n=3$}

\date{\today}

\pacs{03.65.Ud, 03.67.Mn}

\author{Lin Chen}
\email{linchen@buaa.edu.cn}
\affiliation{School of Mathematics and Systems Science, Beihang University, Beijing 100191, China}
\affiliation{International Research Institute for Multidisciplinary Science, Beihang University, Beijing 100191, China}

\author{Yu Yang}
\email{yy19900320@icloud.com}
\affiliation{Department of Mathematics and Statistics, Chongqing Technology and Business University, Chongqing 400067, China}

\author{Wai-Shing Tang}
\email{mattws@nus.edu.sg}
\affiliation{Department of Mathematics, National University of Singapore, 10 Lower Kent Ridge Road, Singapore 119076, Republic of Singapore}

\begin{abstract}
We present the PPT square conjecture introduced by M. Christandl. We prove the conjecture in the case $n=3$ as a consequence of the fact that two-qutrit PPT states have Schmidt at most two. The PPT square conjecture in the case $n\ge4$ is still open. We present an example to support the conjecture for $n=4$. 
\end{abstract}

\maketitle

%\tableofcontents
\section{Introduction}
\label{sec:int}

%\red{\tbc Write an introduction and add references}

In quantum physics, quantum operations are implemented by quantum channels. The quantum channel is a completely positive trace preserving (CPTP) map between two matrix algebras \cite{choi1972}. The composition of quantum channels is a fundamental operation in quantum information, and we need quantum channels creating entangled states useful for quantum-information tasks. Nevertheless, some channels create positive partial transpose (PPT) states, and they become little useful \cite{bchw2015,hhho2005,lg2015}. 
Such channels are called 
PPT channels. It has been further conjectured by M. Christandl that the compositions of PPT channels is an entanglement-breaking channel  \cite{Christhesis,cppt}. That is, the Choi matrix of the composition is a separable state \cite{hsr2003}. The conjecture is known now as the PPT square conjecture/ Proving it would imply a deeper understanding of the difference between PPT and separable states, as well as a novel method for attaching the separability problem.

%\red{Please cite more references from the references of our paper}

The conjecture has received a lot of attentions recently. First
the conjecture holds asymptotically when the distance between the iterates of any unital or trace-preserving PPT channel and the set of entanglement breaking maps tends to zero \cite{kmp2017}. Next every unital PPT channel becomes entanglement breaking after a finite number of iterations \cite{rjp2017}. In the finite dimensional case, very recently it has been proved that the conjecture in dimension three and gave some examples to support the conjecture in dimension four \cite{chw2018}. As the first main result of this paper, we independently show the same result by using a novel way of deciding separable states in $M_3\ox M_3$\footnote{We have claimed the independence of our result from \cite{chw2018} by private communication with Alexander Muller Hermes.}. This is presented in Theorem \ref{thm:n=3}. Different from  \cite{chw2018}, our method can be extended to decide the separability of some $n\times 3$ PPT states in Theorem \ref{thm:n=3,stronger}. This is the second main result of this paper. As far as we know, this is the latest progress on this long-standing open problem. To further investigate the PPT square conjecture for high dimensions, we construct a $4\times4$ PPT entangled state, extract its PPT map and show that the Choi matrix of composition of the PPT map is a separable state. So it supports the PPT square conjecture. We further discuss the example, and construct a more general family of maps satisfying Conjecture \ref{cj:pps} in Theorem \ref{thm:phi=sumj}.

The rest of this paper is organized as follows. In Sec. \ref{sec:3x3}, we construct the definitions we use in this paper. We present the main problem as Conjecture \ref{cj:pps}, and prove the special case on two-qutrit states. We further generalize our results to $n\times3$ states in Corollary \ref{cr:nx3}. In Sec. \ref{sec:4x4}, we construct an example of $4\times4$ PPT state to support Conjecture \ref{cj:pps}, and further investigate the example in Sec. \ref{sec:dis4x4}. Finally we conclude in Sec. \ref{sec:con}.

\section{Proving the PPT square conjecture for $n=3$}
\label{sec:3x3}

We shall work with bipartite states on the space $\cH_A\ox\cH_B$. Linear maps that are both completely positive and completely copositive are called PPT binding maps. 

Let us consider the composition $\phi_2\circ\phi_1$ of two PPT maps $\phi_1$ and $\phi_2$ where $\phi_{1},\phi_2\in M_n(\mathbb{C})\otimes M_n(\mathbb{C})$. Let $M_{m,n}$ be the set of $m\times n$ complex matrices and $B(M_{m,n})$ be the set of all linear maps on $M_{m,n}(\mathbb{C})$. If $m=n$ then we denote $M_n:=M_{n,m}$. We have the Kraus decomposition for a completely positive and trace preserving map $\L$, that is, $\L(*)=\sum_i A_i(*)A_i^\dg$ and $\sum_i A_i^\dg A_i=I$. 
\bd
Let  $C_{\phi_1\circ\phi_2}$ be the Choi matrix of the composition of two CPTP maps $\phi_1$ and $\phi_2$. We shall call $C_{\phi_1\circ\phi_2}$ the composition of two Choi matrices $C_{\phi_1}$ and $C_{\phi_2}$.
\ed
All PPT maps on $M_2(\mathbb{C})$ are separable due to the Peres-Horodecki separation criterion \cite{peres1996,woronowicz1976}. So the first non-trivial case lies in $B(M_3(\mathbb{C}))$ and we shall always confine ourselves the $n\geq3$ cases. Obviously, the composition of two PPT maps is still PPT. The following conjecture is referred to as the PPT square conjecture. It is known that the two conjectures in Conjecture \ref{cj:pps} are equivalent. We refer readers to recent progress on the conjecture in \cite{Collins2018}. %\red{(There is another similar conjecture in which the copositive maps are also considered, as shown in the email by Muller etc.)}
\bcj
\label{cj:pps}
(i) If $\phi$ is a PPT map on $M_n$, then $\phi\circ\phi$ is separable.
\\
(ii) If $\phi_1$ and $\phi_2$ are PPT maps on $M_n$, then $\phi_1\circ\phi_2$ is separable.
\ecj

The following result proves a special case of Conjecture \ref{cj:pps} (i). This is the first main result of this paper. 
\begin{theorem}
\label{thm:n=3}
Conjecture \ref{cj:pps} (i) holds for $n=3$.	
\end{theorem}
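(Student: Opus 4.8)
The plan is to convert the quoted Schmidt-number bound for two-qutrit PPT states into a statement about Kraus operators and then to ``compress away'' the extra output dimension. First I would record that the (unnormalised) Choi matrix $C_\phi\in M_3\ox M_3$ of a PPT map $\phi$ on $M_3$ is itself a PPT matrix: $C_\phi\ge 0$ by complete positivity, and $C_\phi^{T_B}=C_{T\circ\phi}\ge 0$ by complete copositivity. By the quoted fact, $C_\phi$ then admits a decomposition $C_\phi=\sum_k\proj{\psi_k}$ in which every $\ket{\psi_k}$ has Schmidt rank at most two. Writing $\ket{\psi_k}=(I\ox B_k)\ket{\Omega}$ with $\ket{\Omega}=\sum_i\ket{ii}$, the Schmidt rank of $\ket{\psi_k}$ equals $\rank B_k$, so $\phi$ has a Kraus representation
\begin{equation}
\phi(X)=\sum_k B_k X B_k^\dg ,\qquad \rank B_k\le 2\ \text{for all }k .
\end{equation}

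Next, for each $k$ put $W_k:=(\ker B_k)^\perp\su\mathbb{C}^3$, so $\dim W_k=\rank B_k\le 2$, and let $P_k$ be the orthogonal projection onto $W_k$; then $B_k=B_kP_k$, hence
\begin{equation}
(\phi\circ\phi)(X)=\sum_k B_k\,\phi(X)\,B_k^\dg=\sum_k B_k\,\phi_k(X)\,B_k^\dg ,\qquad \phi_k(X):=P_k\,\phi(X)\,P_k .
\end{equation}
Identifying $W_k$ with $\mathbb{C}^{d_k}$, $d_k:=\dim W_k\le 2$, each $\phi_k$ is a linear map $M_3\to M_{d_k}$. Compressing the output of $\phi$ by $P_k$ preserves complete positivity (it is composition with the CP map $Y\mapsto P_kYP_k$) and complete copositivity (since $(P_k\phi(X)P_k)^T=P_k^{T}\,(T\circ\phi)(X)\,P_k^{T}$ and $T\circ\phi$ is CP), so $\phi_k$ is again PPT. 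Its Choi matrix therefore is a PPT state in $M_3\ox M_{d_k}$ with $d_k\le 2$, and the Peres--Horodecki/Woronowicz criterion \cite{peres1996,woronowicz1976} forces such a state to be separable; that is, each $\phi_k$ is entanglement breaking.

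To finish, the entanglement-breaking maps form a two-sided ideal under composition, so every summand $B_k\,\phi_k(\,\cdot\,)\,B_k^\dg=\big(B_k(\,\cdot\,)B_k^\dg\big)\circ\phi_k$ is entanglement breaking, and so is the finite sum $\phi\circ\phi$. Equivalently $C_{\phi\circ\phi}$ is separable, which is precisely Conjecture \ref{cj:pps}(i) for $n=3$.

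The only genuinely hard ingredient is the Schmidt-number bound for two-qutrit PPT states; once that is available the argument above is short. It also makes clear why the method is tied to $n=3$: the compressions $\phi_k$ land in $M_3\ox M_{\le 2}$, the last regime in which PPT implies separable. For $n\ge 4$ the analogous construction would at best compress into $M_n\ox M_{\le n-1}$, where PPT states need not be separable, so this route does not extend — consistent with the conjecture remaining open for $n\ge 4$.
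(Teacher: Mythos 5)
Your proof is correct, and it reaches the conclusion by a route that is recognisably dual to the paper's. The paper fixes an arbitrary input $\r$, observes that the intermediate state $\s=(I_3\ox\phi)(\r)$ is a two-qutrit PPT state and hence has Schmidt number at most two by the same reference you invoke, decomposes $\s$ into pure states of Schmidt rank at most two, and notes that the second application of $\phi$ turns each such pure state into a PPT state supported on $\bbC^2\ox\bbC^3$, which Peres--Horodecki declares separable. You instead apply the Schmidt-number bound to the Choi matrix of $\phi$ itself, converting it into a Kraus representation with operators of rank at most two, and use the resulting rank-two support projections to compress the \emph{inner} copy of $\phi$ into PPT maps $M_3\ra M_{d_k}$ with $d_k\le 2$, which are entanglement breaking for the same Peres--Horodecki reason; the ideal property of entanglement-breaking maps under composition with CP maps then finishes the job. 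Both arguments consume exactly the same two ingredients (Schmidt number at most two for two-qutrit PPT states, and PPT implies separable in $2\times 3$), but they localise the dimension reduction on opposite sides: the paper shrinks the $A$-marginal of the intermediate state, you shrink the input of the outer map. The payoff differs accordingly: the paper's version generalises to Theorem \ref{thm:n=3,stronger} (inner map merely CP with Schmidt-number-two output, outer map PPT), whereas yours yields the mirror statement that $\phi_2\circ\phi_1$ is entanglement breaking whenever $\phi_1$ is PPT and $\phi_2$ is CP with a Kraus representation by operators of rank at most two. Your Kraus-level phrasing also makes fully explicit the ``up to local equivalence'' dimension count that the paper's proof leaves implicit, which is a small gain in rigour.
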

\begin{proof}
Suppose $\r$ is an arbitrary quantum state in $M_3\ox M_3$. So $\s:=(I_3\ox\phi)(\r)$ is a PPT state in $M_3\ox M_3$. It is known that $\s$ has Schmidt number at most two \cite{ylt16}. Let $\s=\sum_j p_j\proj{a_j}$ where each $\ket{a_j}$ has Schmidt rank at most two. 
That is, $\ket{a_j}\in\cK_j\simeq \bbC^3\ox\bbC^2$. So each state $(I_3\ox\phi)(\proj{a_j})$ is a PPT state in $M_3\ox M_2$ up to local equivalence. The Peres-Horodecki criterion says that $(I_3\ox\phi)(\proj{a_j})$ is separable. Using the convex sum of $\s$ we obtain that $(I_3\ox\phi)(\s)=(I_3\ox(\phi\circ\phi))(\r)$ is separable. Choosing $\r$ as the maximally entangled state implies the assertion.
\end{proof}

The following is a corollary of Theorem \ref{thm:n=3}. It provides a novel method of deciding separable states in $M_n\ox M_3$.

\begin{corollary}
\label{cr:nx3}
If $\phi_1,\phi_2$ are two PPT maps on $M_3$ and $\r\in M_n\ox M_3$ then the state $(I_n\ox (\phi_1\circ\phi_2))\r$ is separable. 	
\end{corollary}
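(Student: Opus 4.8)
The plan is to deduce this from Theorem~\ref{thm:n=3} in two moves: first reduce the two-map composition to a single entanglement-breaking channel, and then observe that an entanglement-breaking channel still sends every state to a separable one after tensoring with the identity on an \emph{arbitrary} ancilla. For the first move, $\phi_1\circ\phi_2$ is again a PPT (hence CPTP) map on $M_3$, so Theorem~\ref{thm:n=3} together with the stated equivalence of parts (i) and (ii) of Conjecture~\ref{cj:pps} gives that $\phi:=\phi_1\circ\phi_2$ is separable, i.e.\ its Choi matrix is a separable state. By the Horodecki--Shor--Ruskai characterisation of entanglement-breaking channels \cite{hsr2003}, $\phi$ then admits a Kraus decomposition $\phi(X)=\sum_k A_k X A_k^\dg$ in which every $A_k$ has rank one, say $A_k=\ketbra{a_k}{b_k}$ with $\ket{a_k},\ket{b_k}\in\bbC^3$ and $\ket{a_k}$ a unit vector.

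For the second move, take an arbitrary state $\r\in M_n\ox M_3$, so that $(I_n\ox\phi)(\r)=\sum_k (I_n\ox A_k)\,\r\,(I_n\ox A_k)^\dg$. Because $(I_n\ox A_k)(\ket{u}\ox\ket{v})=\braket{b_k}{v}\,\ket{u}\ox\ket{a_k}$ for all $\ket{u}\in\bbC^n$ and $\ket{v}\in\bbC^3$, the range of $I_n\ox A_k$ is contained in $\bbC^n\ox\lin\{\ket{a_k}\}$, whose second tensor factor is one-dimensional. Hence the positive semidefinite operator $\s_k:=(I_n\ox A_k)\r(I_n\ox A_k)^\dg$ is of the form $\tau_k\ox\proj{a_k}$ for some positive semidefinite $\tau_k\in M_n$; such an operator is a multiple of a product state, in particular separable. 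Summing, $(I_n\ox\phi)(\r)=\sum_k\s_k$ is a finite sum of separable operators, hence separable, which is the assertion.

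I do not anticipate a substantial obstacle, but the step that genuinely matters is the passage to the entanglement-breaking picture: Theorem~\ref{thm:n=3} as literally stated only controls ancilla dimension $3$ (it is phrased for $\r\in M_3\ox M_3$), whereas Corollary~\ref{cr:nx3} allows any $n$, and it is exactly the rank-one Kraus form of $\phi$ that makes the ancilla dimension irrelevant. A tempting alternative would be to imitate the proof of Theorem~\ref{thm:n=3} directly --- put $\s:=(I_n\ox\phi_2)(\r)$, a PPT state in $M_n\ox M_3$, decompose $\s$ into components of Schmidt rank at most two, and apply the Peres--Horodecki criterion --- but this needs a Schmidt-number bound for PPT states in $M_n\ox M_3$ that does not hold in general (this is precisely the difficulty that Theorem~\ref{thm:n=3,stronger} only partially resolves), so I would not pursue that route.
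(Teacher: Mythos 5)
Your proof is correct, and it reaches the same place as the paper by a slightly different route in the one step that actually requires work. Both arguments begin identically: Theorem~\ref{thm:n=3} plus the stated equivalence of (i) and (ii) in Conjecture~\ref{cj:pps} yields that the Choi matrix of $\phi_1\circ\phi_2$ is separable. The divergence is in how one promotes this ancilla-dimension-$3$ statement to arbitrary $n$. The paper decomposes $\r$ into pure states $\ket{\ps_j}=(A_j\ox I_3)\ket{\Ps_3}$ with $A_j:\bbC^3\ra\bbC^n$ (possible because the $B$ side is $M_3$, so every pure component has Schmidt rank at most $3$), and then pushes the local operator $A_j$ through $(I_n\ox(\phi_1\circ\phi_2))$ to reduce to the Choi matrix itself; separability is preserved under $A_j\ox I$. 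You instead invoke the Horodecki--Shor--Ruskai characterisation to convert separability of the Choi matrix into a rank-one Kraus decomposition $A_k=\ketbra{a_k}{b_k}$, after which $(I_n\ox A_k)\r(I_n\ox A_k)^\dg=\tau_k\ox\proj{a_k}$ is manifestly a product operator for every $n$ and every input. These are two standard, interchangeable proofs of the fact that a channel with separable Choi matrix is entanglement breaking on any ancilla; yours makes the independence of $n$ completely explicit and avoids the paper's (slightly imprecise) reduction to $\r_A$ maximally mixed, while the paper's version stays closer to the filter-through-the-maximally-entangled-state formalism it uses elsewhere. Your closing remark is also well taken: imitating the proof of Theorem~\ref{thm:n=3} directly would require a Schmidt-number-two bound for PPT states in $M_n\ox M_3$, which is not available in general, so the detour through the Choi matrix is indeed the right move.
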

\begin{proof}
It suffices to prove the assertion when $\r_A$ is the maximally mixed state, namely $\r_A={1\over n}I_n$. We assume $\r=\sum_j \proj{\ps_j}$, where $\ket{\ps_j}=(A_j\ox I_3)\ket{\Ps_3}$, where the isometry $A_j:\bbC^3\ra\bbC^n$ and $\ket{\Ps_3}={1\over\sqrt3}\sum^2_{i=0}\ket{ii}$ is the two-qutrit maximally entangled state. 
Using Theorem \ref{thm:n=3} and the equivalence of the two statements in Conjecture \ref{cj:pps}, we obtain that the state $(I_n\ox (\phi\circ\phi))(\proj{\ps_j}$) is separable for any $j$. So the state $(I_n\ox (\phi_1\circ\phi_2))\r$ is separable. This completes the proof.	
\end{proof}

Note that the proofs of Theorem \ref{thm:n=3} and Corollary \ref{cr:nx3} both apply to the case when $\phi,\phi_1,\phi_2$ are not trace-preserving.
Further, the proof of Theorem \ref{thm:n=3} does not rely on the fact that $(I_3\otimes\phi)(\r)$ is a PPT. In fact, it relies on the fact that $(I_3\otimes\phi)(\r)$ has Schmidt number at most two\cite{ylt16}. So we have the following result. This is the second main result of this paper. 
\begin{theorem}
\label{thm:n=3,stronger}	
Suppose $\phi_1$ is a PPT map on $M_3$, $\phi_2$ is a completely positive map on $M_3$, and the bipartite state $\r\in M_n\otimes M_3$. If $(I_n\otimes\phi_2)\r$ has Schmidt number at most two then the state $(I_n\ox (\phi_1\circ\phi_2))\r$ is separable. 	 
\end{theorem}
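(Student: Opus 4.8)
The plan is to run the argument of Theorem~\ref{thm:n=3} essentially verbatim, the only change being that the Schmidt-number bound is now imposed directly on $(I_n\otimes\phi_2)\rho$ rather than deduced from a PPT property. First I would set $\sigma:=(I_n\otimes\phi_2)\rho$, a positive operator on $\bbC^n\otimes\bbC^3$ since $\phi_2$ is completely positive. By hypothesis $\sigma$ has Schmidt number at most two, so by the definition of Schmidt number we may write $\sigma=\sum_j p_j\proj{a_j}$ with $p_j\ge 0$ and each $\ket{a_j}\in\bbC^n\otimes\bbC^3$ of Schmidt rank at most two. Since $\phi_1\circ\phi_2$ acts on the second system only, $(I_n\otimes(\phi_1\circ\phi_2))\rho=(I_n\otimes\phi_1)\sigma=\sum_j p_j\,(I_n\otimes\phi_1)(\proj{a_j})$, so by convexity of the set of separable operators it suffices to show that each summand $(I_n\otimes\phi_1)(\proj{a_j})$ is separable.

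Next, fix $j$. Because $\ket{a_j}$ has Schmidt rank at most two it is supported on $V_j\otimes\bbC^3$ for some subspace $V_j\subseteq\bbC^n$ of dimension at most two, and since $\phi_1$ does not touch the first tensor factor, $(I_n\otimes\phi_1)(\proj{a_j})$ is again supported on $V_j\otimes\bbC^3$; after a local isometry on the first system it becomes a positive operator on $\bbC^2\otimes\bbC^3$. It is moreover PPT: it is positive because $\phi_1$ is completely positive, and its partial transpose over the second system is $(I_n\otimes(T\circ\phi_1))(\proj{a_j})\ge 0$, where $T$ is transposition on $M_3$, because $\phi_1$ is completely copositive. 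A PPT state on $\bbC^2\otimes\bbC^3$ is separable by the Peres--Horodecki criterion \cite{peres1996,woronowicz1976}, hence so is $(I_n\otimes\phi_1)(\proj{a_j})$; summing over $j$ gives the assertion.

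I do not expect a genuine obstacle: the one point to state with a little care is the reduction ``$\ket{a_j}$ of Schmidt rank $\le 2$ $\Rightarrow$ $(I_n\otimes\phi_1)(\proj{a_j})$ lives, up to a local isometry on the first system, on $\bbC^2\otimes\bbC^3$,'' which is immediate once one notes that $I_n\otimes\phi_1$ acts trivially on the first factor and therefore cannot enlarge the first-system support. It is worth recording, as the remark following Theorem~\ref{thm:n=3} already hints, that trace preservation is never used: only complete positivity of $\phi_2$ (so that $\sigma$ and its Schmidt number are meaningful) and complete positivity together with complete copositivity of $\phi_1$ (so that each $(I_n\otimes\phi_1)(\proj{a_j})$ is a PPT operator) enter. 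In the special case $\phi_2=\phi_1$, $n=3$, the hypothesis that $(I_3\otimes\phi_2)\rho$ has Schmidt number at most two is exactly the two-qutrit PPT fact of \cite{ylt16}, so this theorem simply isolates what the proof of Theorem~\ref{thm:n=3} actually uses.
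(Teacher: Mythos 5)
Your proof is correct and follows essentially the same route as the paper, which proves this theorem simply by observing that the argument of Theorem~\ref{thm:n=3} (decompose $(I_n\otimes\phi_2)\rho$ into Schmidt-rank-$\le 2$ pure states, note each image under $I_n\otimes\phi_1$ is a PPT state on $\bbC^2\otimes\bbC^3$ up to a local isometry, and apply Peres--Horodecki plus convexity) only uses the Schmidt-number hypothesis, not PPT-ness of $(I_n\otimes\phi_2)\rho$. Your write-up is in fact slightly more careful than the paper's on the one delicate point, namely that it is the \emph{first} tensor factor whose support stays two-dimensional under $I_n\otimes\phi_1$.
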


Corollary \ref{cr:nx3} and Theorem \ref{thm:n=3,stronger} provide two channels $\phi_1,\phi_2$ such that their combination becomes both entanglement-breaking channels. If $\phi_2$ is a PPT map then $(I_n\otimes\phi_2)\r$ has Schmidt number at most two. So Theorem \ref{thm:n=3,stronger} is stronger than Corollary \ref{cr:nx3}.

The next case for studying Conjecture \ref{cj:pps} is $M_4\otimes M_4$. We propose the following conjecture. It is not included in Theorem \ref{thm:n=3,stronger}, because the latter discusses only states in $M_n\otimes M_3$.

\begin{conjecture}
\label{cj:4x4}
If $\phi$ is a PPT map on $M_4$ and $\r\in M_4\ox M_4$ has Schmidt rank two, then the state $(I_4\ox (\phi\circ\phi))\r$ is separable. 
\end{conjecture}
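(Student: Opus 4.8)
The plan is to follow the proof of Theorem~\ref{thm:n=3} as far as it will go, pinpoint where the larger local dimension defeats it, and then say what must replace the missing step. First observe that Conjecture~\ref{cj:4x4} is a special case of the PPT square conjecture for $n=4$: if $\phi\circ\phi$ were entanglement breaking then so would be its restriction to any corner, in particular to the support of the second subsystem in a Schmidt rank two $\r$, a subspace of dimension at most two. Thus the conjecture asks for this partial statement to be proved directly. By splitting a Schmidt number two state into pure components we may assume $\r=\proj{\ps}$ with $\ket{\ps}$ of Schmidt rank at most two. Writing $\phi$ in Kraus form $\phi(X)=\sum_m A_mXA_m^\dg$ and using that $I_4\ox A_m$ cannot raise the Schmidt rank, we get $\s:=(I_4\ox\phi)(\r)=\sum_m\proj{\ps_m}$ with $\ket{\ps_m}=(I_4\ox A_m)\ket{\ps}$, each of Schmidt rank at most two. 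Since $\phi$ and $T_4\circ\phi$ are completely positive, $\s$ is PPT; and since its first marginal has rank at most two, $\s$ is supported on a subspace $V\ox\bbC^4$ with $\dim V\le2$, i.e.\ effectively a state in $M_2\ox M_4$, already displayed as a convex sum of Schmidt rank $\le2$ pure states. The goal is to prove that $(I_4\ox(\phi\circ\phi))(\r)=(I_4\ox\phi)(\s)$ is separable.

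Now mimic Theorem~\ref{thm:n=3}: apply the second $\phi$ termwise, $(I_4\ox\phi)(\s)=\sum_m(I_4\ox\phi)(\proj{\ps_m})$. Let $U_m\sue\bbC^4$ be the range of the second marginal of $\proj{\ps_m}$, so $\dim U_m\le2$. A short computation shows that, up to a local transformation on the first system, $(I_4\ox\phi)(\proj{\ps_m})$ equals the Choi matrix of the restriction $\phi|_{U_m}$ of $\phi$ to the corner $\cB(U_m)\simeq M_2$; since $\phi$ is PPT, this is a PPT state in $M_2\ox M_4$. In the $n=3$ analogue one would be done here, because a PPT state in $M_2\ox M_3$ is separable by the Peres--Horodecki criterion. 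But a PPT state in $M_2\ox M_4$ need not be separable, and in fact, as $\phi$ runs over PPT maps on $M_4$ (precompose any PPT map $\Lambda\colon M_2\to M_4$ with a compression onto a corner), $\phi|_{U_m}$ can realise an arbitrary PPT map $M_2\to M_4$, including bound entangled ones. So the termwise step genuinely breaks down, and the two copies of $\phi$ must be used jointly rather than one at a time.

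This is the real obstacle. The plan would be to prove the ``mixed'' PPT square statement: if $\Lambda\colon M_2\to M_4$ and $\phi\colon M_4\to M_4$ are both PPT, then $\phi\circ\Lambda$ is entanglement breaking. Conjecture~\ref{cj:4x4} then follows by taking $\Lambda=\phi|_W$, where $W\sue\bbC^4$ is the support of the second subsystem in $\r$ (so $\dim W\le2$), because $\phi\circ\Lambda$ is exactly the corner of $\phi\circ\phi$ cut out by $W$. The advantage of the mixed statement is that the Choi matrix $\s$ of $\Lambda$ lives in $2\times4$, where the structure of PPT states is well understood: write $\s$ as a sum of a separable state and an edge PPT state in canonical form. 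The separable summand causes no trouble, since applying the completely positive map $\phi$ to one subsystem preserves separability; so everything reduces to showing that a single application of the copositive map $\phi$ on the four--dimensional side turns a $2\times4$ edge PPT state into a separable state. \textbf{This last step is the hard part.} It is essentially a base case of the PPT square conjecture for $n=4$, and I do not see how to settle it without either assuming that conjecture or discovering a new structural constraint that ties a PPT map on $M_4$ to the $2\times4$ PPT states realisable as its corners. (A cruder route, using that a PPT state of rank at most $\max(d_A,d_B)$ is separable and attempting to bound the relevant Choi rank by $4$, fails in general: that rank can be as large as $8$ for a generic PPT $\phi$, so it too would need additional input.)
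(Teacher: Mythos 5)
This statement is not proved in the paper: it is posed as an open conjecture (a special case of the PPT square conjecture for $n=4$), and the paper's only evidence for it is the single worked example of Section \ref{sec:4x4} together with the partial result of Theorem \ref{thm:phi=sumj}, which handles maps $\phi$ that decompose as sums of maps landing in $3\times 3$ corners. Your proposal likewise does not prove the statement, and you say so explicitly, so there is no disagreement to adjudicate; what you have written is a correct diagnosis of the obstruction rather than a proof.

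Your analysis is sound and worth recording. The reduction to pure Schmidt-rank-two inputs, the observation that $(I_4\ox\phi)(\proj{\ps_m})$ is locally equivalent to the Choi matrix of $\phi$ restricted to a two-dimensional corner $U_m$, and the point that this corner map can realise an arbitrary PPT map $M_2\to M_4$ (so its Choi matrix can be a $2\times 4$ bound entangled state, and the termwise Peres--Horodecki step that closes the $n=3$ proof is unavailable) are all correct; this is precisely why Theorem \ref{thm:n=3} does not extend. Your reformulation as a ``mixed'' statement --- that $\phi\circ\Lambda$ is entanglement breaking whenever $\Lambda\colon M_2\to M_4$ and $\phi\colon M_4\to M_4$ are PPT --- is an accurate and arguably cleaner restatement of Conjecture \ref{cj:4x4}, and your remark that the rank bound for separability of low-rank PPT states does not apply (the relevant Choi rank can reach $8$) is also correct. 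The one thing to be clear about is that none of this constitutes progress toward a proof: the ``hard part'' you isolate is equivalent to the conjecture itself, exactly as the paper leaves it.
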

In the next section we construct an example supporting Conjecture \ref{cj:4x4} and thus Conjecture \ref{cj:pps}.

\section{An example to support the PPT square conjecture for $n=4$}
\label{sec:4x4}

It is unknown whether Conjecture \ref{cj:pps} for $n\ge4$ is true. 
Some examples satisfying the conjecture have been constructed in \cite{Collins2018}. They respectively rely on randomness, graphs, certain symmetries, or are Gaussian channels. Different from these examples, in this section we shall construct an example satisfying Conjecture \ref{cj:pps} for $n=4$. In particular, we construct a $4\times4$ PPT entangled state $(I_4\ox\phi)(\proj{\ps})$ in \eqref{eq:sigma} with Kraus operators in \eqref{eq:kraus}, such that the state
$(I_4\otimes(\phi\circ\phi))(\proj{\ps})$ defined via \eqref{eq:gamma} turns out to be separable. Here $\phi$ is the PPT map on $\cH_B$ defined in \eqref{eq:sigma} and $\ket{\ps}=\ket{00}+\ket{11}+\ket{22}+\ket{33}$ is the $4\times4$ maximally entangled state.

%&=&(\ket{00}+\ket{11}+\ket{22})
%(\bra{00}+\bra{11}+\bra{22})	
%\\&+&
%(\ket{01}+\ket{10}+\ket{33})
%(\bra{01}+\bra{10}+\bra{33})	
%\\&+&
%\proj{02}+\proj{03}+\proj{12}+\proj{13}+
%\proj{20}+\proj{30}+\proj{21}+\proj{31}
%\\

We construct the following state inspired by \cite[Sec VII.B]{dps2004}.
\begin{eqnarray}
\r_2
\label{eq:sn2-1}
&=&(\ket{00}+\ket{11}+\ket{22})
(\bra{00}+\bra{11}+\bra{22})	
+
\proj{02}+\proj{20}+
\\\label{eq:sn2-2}
&+&
(\ket{01}+\ket{10}+\ket{33})
(\bra{01}+\bra{10}+\bra{33})	
+
\proj{03}+\proj{31}+
\\\label{eq:sn2-3}
&+&
\proj{12}+\proj{13}+
\proj{30}+\proj{21}.
\end{eqnarray}
The partial transpose of $\r_2$ is
\begin{eqnarray}
\r_2^{\G}&=&
(
\proj{00}+\proj{11}+\proj{22}
+\ketbra{01}{10}+\ketbra{10}{01}
+\ketbra{02}{20}+\ketbra{20}{02}
+\ketbra{12}{21}+\ketbra{21}{12}
)
\\&+&
(
\proj{01}+\proj{10}+\proj{33}
+\ketbra{00}{11}+\ketbra{11}{00}
+\ketbra{31}{03}+\ketbra{03}{31}
+\ketbra{13}{30}+\ketbra{30}{13}
)
\\&+&	
\proj{02}+\proj{20}+
\proj{03}+\proj{31}+
\proj{12}+\proj{13}
+\proj{30}+\proj{21}.
\end{eqnarray}
Let $\r_2\ra \s:={1\over3}(\diag(a,b,c,1)\ox I_4)\r_2(\diag(a,b,c,1)\ox I_4)$ with positive $a,b,c$. Then
\begin{eqnarray}
\s
&=&
{1\over3}
\bigg(
(a\ket{00}+b\ket{11}+c\ket{22})
(a\bra{00}+b\bra{11}+c\bra{22})	
\\&+&
(a\ket{01}+b\ket{10}+\ket{33})
(a\bra{01}+b\bra{10}+\bra{33})	
\\&+&
a^2\proj{02}+a^2\proj{03}+b^2\proj{12}+b^2\proj{13}+
c^2\proj{20}+\proj{30}+c^2\proj{21}+\proj{31}
\bigg).
\end{eqnarray}
To make $\phi$ as a quantum channel, we require
$
\s_A=
{1\over3}(4a^2\proj{0}+4b^2\proj{1}+3c^2\proj{2}+3\proj{3})=I_4.	
$
So we have $a=b={\sqrt3\over2}$, and $c=1$. We have 
\begin{eqnarray}
\label{eq:sigma}
\s
=
(I_4\otimes\phi)(\proj{\ps})
:=\sum^{10}_{j=1}(I_4\ox P_j)\proj{\ps}(I_4\ox P_j^\dg),	
\end{eqnarray}
where the Kraus operators of map $\phi$ are
\begin{eqnarray}
\label{eq:kraus}
P_1&=&{1\over\sqrt3}\diag(a,b,c,0)
=P_1^\dg,
\quad\quad\quad\quad
\quad\quad\quad\quad
P_2=
{1\over\sqrt3}
(a\ketbra{1}{0}+b\ketbra{0}{1}+\proj{3})
=P_2^\dg,
\notag\\	
P_3&=& {a\over\sqrt3}\ketbra{2}{0}, 
\quad\quad\quad\quad
P_4= {a\over\sqrt3}\ketbra{3}{0}, 
\quad\quad\quad\quad
P_5= {b\over\sqrt3}\ketbra{2}{1}, 
\quad\quad\quad\quad	
P_6= {b\over\sqrt3}\ketbra{3}{1}, 
\notag\\	
P_7&=& {c\over\sqrt3}\ketbra{0}{2}, 
\quad\quad\quad\quad
P_8= {c\over\sqrt3}\ketbra{1}{2}, 
\quad\quad\quad\quad
P_9= {1\over\sqrt3}\ketbra{0}{3}, 
\quad\quad\quad\quad
P_{10}= {1\over\sqrt3}\ketbra{1}{3}, 
\end{eqnarray}
satisfy $\sum_j P_j^\dg P_j=I_4$. So we obtain the PPT map $\phi:\bbC^4\ra\bbC^4$. For our purpose we investigate the PPT map $\phi\circ\phi$ with Kraus operators $\{P_iP_j\}$.
By computing one can show that the nonzero operators in $\{P_iP_j\}$ are the following 44 matrices in $M_4\ox M_4$. 
\begin{eqnarray}
P_1P_1&=&{1\over3}\diag(a^2,b^2,c^2,0),
\quad\quad\quad\quad
\quad\quad\quad\quad
P_1P_2= 
P_2P_1=
{ab\over3}
(\ketbra{1}{0}+\ketbra{0}{1})
,
\\	
P_1P_3&=& {ac\over3}\ketbra{2}{0}, 
\quad\quad\quad\quad	
P_1P_5= {bc\over3}\ketbra{2}{1}, 
\\	
P_1P_7&=& {ac\over3}\ketbra{0}{2}, 
\quad\quad\quad\quad	
P_1P_8= {bc\over3}\ketbra{0}{3}, 
\quad\quad\quad\quad	
P_1P_9= {a\over3}\ketbra{1}{2}, 
\quad\quad\quad\quad	
P_1P_{10}= {b\over3}\ketbra{1}{3}, 
\end{eqnarray}
\begin{eqnarray}
P_2P_2&=& 
{1\over3}
(ab\ketbra{0}{0}+ab\ketbra{1}{1}+\proj{3}),
\\	
P_2P_4&=& {a\over3}\ketbra{3}{0}, 
\quad\quad\quad\quad
P_2P_6= {b\over3}\ketbra{3}{1}, 
\\	
P_2P_7&=& {ac\over3}\ketbra{1}{2}, 
\quad\quad\quad\quad
P_2P_8= {bc\over3}\ketbra{0}{2}, 
\quad\quad\quad\quad
P_2P_9= {a\over3}\ketbra{1}{3}, 
\quad\quad\quad\quad
P_2P_{10}= {b\over3}\ketbra{0}{3}, 
\end{eqnarray}
and
\begin{eqnarray}
P_3P_1&=&{a^2\over3}\ketbra{2}{0},
\quad\quad\quad\quad
P_3P_2=
{ab\over3}
\ketbra{2}{1},	
\quad\quad\quad\quad
P_3P_7= {ac\over3}\ketbra{2}{2}, 
\quad\quad\quad\quad
P_3P_9= {a\over3}\ketbra{2}{3}, 
\\
P_4P_1&=&{a^2\over3}\ketbra{3}{0},
\quad\quad\quad\quad
P_4P_2=
{ab\over3}\ketbra{3}{1},	
\quad\quad\quad\quad
P_4P_7= {ac\over3}\ketbra{3}{2}, 
\quad\quad\quad\quad
P_4P_9= {a\over3}\ketbra{3}{3}, 
\\
P_5P_1&=&{b^2\over3}\ketbra{2}{1},
\quad\quad\quad\quad
P_5P_2=
{ab\over3}
\ketbra{2}{0},	
\quad\quad\quad\quad
P_5P_8= {bc\over3}\ketbra{2}{2}, 
\quad\quad\quad\quad
P_5P_{10}= {b\over3}\ketbra{2}{3}, 
\\
P_6P_1&=&{b^2\over3}\ketbra{3}{1},
\quad\quad\quad\quad
P_6P_2={ab\over3}\ketbra{3}{0},	
\quad\quad\quad\quad
P_6P_8= {bc\over3}\ketbra{3}{2}, 
\quad\quad\quad\quad
P_6P_{10}= {b\over3}\ketbra{3}{3},
\\
P_7P_1&=&{c^2\over3}\ketbra{0}{2},
\quad\quad\quad\quad
P_7P_3= {ac\over3}\ketbra{0}{0}, 
\quad\quad\quad\quad
P_7P_5= {bc\over3}\ketbra{0}{1}, 
\\
P_8P_1&=&{c^2\over3}\ketbra{1}{2},
\quad\quad\quad\quad
P_8P_3= {ac\over3}\ketbra{1}{0}, 
\quad\quad\quad\quad
P_8P_5= {bc\over3}\ketbra{1}{1}, 
\\
P_9P_2&=&{1\over3}\ketbra{0}{3},
\quad\quad\quad\quad
P_9P_4={a\over3}\ketbra{0}{0}, 
\quad\quad\quad\quad
P_9P_6= {b\over3}\ketbra{0}{1}, 
\\
P_{10}P_2&=&{1\over3}\ketbra{1}{3},
\quad\quad\quad\quad
P_{10}P_4={a\over3}\ketbra{1}{0}, 
\quad\quad\quad\quad
P_{10}P_6= {b\over3}\ketbra{1}{1}.
\end{eqnarray}
For convenience, we define the invertible diagonal matrix 
$D=\diag(4,4,3,3)$. We perform the map $\phi$ on the state $\s$ in \eqref{eq:sigma}, and investigate the separability of the resulting state as follows.
\begin{eqnarray}
\label{eq:gamma}
\g &:=&	
(I_4\ox D)
\bigg(
(I_4\otimes(\phi\circ\phi))(\proj{\ps})
\bigg)(I_4\ox D)
\\&=&
(I_4\ox D)
\bigg(
\sum^{10}_{j,k=1}(I_4\ox P_jP_k)\s(I_4\ox P_k^\dg P_j^\dg)
\bigg)(I_4\ox D)
\\&=&	
\bigg(\ket{00}+\ket{11}+\ket{22}\bigg)
\bigg(\bra{00}+\bra{11}+\bra{22}\bigg)
\\&+&
2
(\ket{01}+\ket{10})
(\bra{01}+\bra{10})
+	
\bigg(\ket{00}+\ket{11}+\ket{33}\bigg)
\bigg(\bra{00}+\bra{11}+\bra{33}\bigg)
\\&+&
{8\over3}\proj{00}+
{8\over3}\proj{10}+
{40\over9}\proj{20}+
{40\over9}\proj{30}
+
{11\over3}\proj{01}+
{11\over3}\proj{11}+
{52\over9}\proj{21}+
{52\over9}\proj{31}
\\&+&
{21\over16}\proj{02}+
{21\over16}\proj{12}+
{3\over4}\proj{22}+
{3\over4}\proj{32}
+
{15\over8}\proj{03}+
{15\over8}\proj{13}+
{3\over2}\proj{23}+
{3\over2}\proj{33}.
\end{eqnarray}
The partial transpose of $\g$ is
\begin{eqnarray}
\g^\G &=&		
\proj{00}+\proj{11}+\proj{22}
+\ketbra{01}{10}+\ketbra{10}{01}
+\ketbra{12}{21}+\ketbra{21}{12}
+\ketbra{20}{02}+\ketbra{02}{20}
\\&+&
2\proj{01}+2\proj{10}
+2\ketbra{00}{11}+2\ketbra{11}{00}
\\&+&	
\proj{00}+\proj{11}+\proj{33}
+\ketbra{01}{10}+\ketbra{10}{01}
+\ketbra{03}{30}+\ketbra{30}{03}
+\ketbra{13}{31}+\ketbra{31}{13}
\\&+&
{8\over3}\proj{00}+
{8\over3}\proj{10}+
{40\over9}\proj{20}+
{40\over9}\proj{30}
+
{11\over3}\proj{01}+
{11\over3}\proj{11}+
{52\over9}\proj{21}+
{52\over9}\proj{31}
\\&+&
{21\over16}\proj{02}+
{21\over16}\proj{12}+
{3\over4}\proj{22}+
{3\over4}\proj{32}
+
{15\over8}\proj{03}+
{15\over8}\proj{13}+
{3\over2}\proj{23}+
{3\over2}\proj{33}.
\end{eqnarray}
By splitting $\g^\G$ into the sum of a few two-qubit states $\sum_{i,j}p_{ij}\ketbra{ij}{ji}$ with $p_{ij}=0,1$ or $2$ for $i\ne j$, one can show that each of the two-qubit states has PPT. So they are separable by the Peres-Horodecki criterion. 
Summing up them implies that $\g^\G$ is separable. The definition of $\g$ implies that the state $(I_4\otimes(\phi\circ\phi))(\proj{\ps})$ is also separable.

One can show that $\r_2^\G$ is positive semidefinite. Further, $\r_2$ is locally equivalent to the PPT entangled state in  \cite[Sec VII.B]{Doherty2004Complete}.
So $\r_2$ is a PPT entangled state, and $\sn(\r_2)=\sn(\r_2^\G)=2$. 
In particular $\sn(\r_2)=2$ follows from the fact that the two states in \eqref{eq:sn2-1} and \eqref{eq:sn2-2} both have Schmidt number two. So Conjecture \ref{cj:4x4} holds for our example by choosing $\r_2=(I_4\ox\phi)\r$ in Conjecture \ref{cj:4x4}.

\section{Discussion on the example}
\label{sec:dis4x4}

In this section we investigate the example of last section, and present Theorem \ref{thm:phi=sumj} to cover the example. The state $\r_2$ in \eqref{eq:sn2-1}-\eqref{eq:sn2-3} can be written as $\r_2=\r_3+\r_4$ where the two states
\begin{eqnarray}
\r_3&=&(\ket{00}+\ket{11}+\ket{22})
(\bra{00}+\bra{11}+\bra{22})	
+
\proj{02}+\proj{20}
+
\proj{12}+\proj{21},
\\
\r_4 &=&
(\ket{01}+\ket{10}+\ket{33})
(\bra{01}+\bra{10}+\bra{33})	
+
\proj{03}+\proj{31}
+
\proj{13}+\proj{30}.
\end{eqnarray}
So $\r_3$ and $\r_4$ respectively act on the space $\bbC^3\ox\bbC^3$ and $\lin\{\ket{0},\ket{1},\ket{3}\}\ox\lin\{\ket{0},\ket{1},\ket{3}\}$. That is they are both two-qutrit states. Further, they both have Schmidt rank two from the last section. We have
\begin{eqnarray}
(I_4\ox\phi)\s\sim
(I_4\ox\phi)\r_2	
=&&
\sum^{10}_{j=1}
(I_4\ox P_j)
\r_2
(I_4\ox P_j^\dg)
\\=&&
\sum^{10}_{j=1}
(I_4\ox P_j)
\r_3
(I_4\ox P_j^\dg)
+
\sum^{10}_{j=1}
(I_4\ox P_j)
\r_4
(I_4\ox P_j^\dg)
,
\end{eqnarray}
where the last two sums respectively stand for the direct sum of a two-qutrit states and a product state in terms of the Kraus operators $P_j$'s. So they are both separable from Theorem \ref{thm:n=3,stronger}. From this argument we conclude the following result.
\begin{theorem}
\label{thm:phi=sumj}
Suppose $\phi=\sum_j \phi_j$ such that for any $j$ we have $\phi_j:\bbC^{3\times p}\ra\bbC^{3\times3}$ are both PPT maps. Suppose $\phi'$ is a PPT map such that $(I\otimes\phi')\r$ has Schmidt number two. Then $(I_n\otimes \phi\circ\phi')(\r)$ is separable. 
\end{theorem}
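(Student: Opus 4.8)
The plan is to reduce the statement to Theorem~\ref{thm:n=3,stronger} applied separately to each $\phi_j$, and then invoke the fact that a positive sum of separable states is separable. Set $\tau:=(I_n\ox\phi')(\r)$; by hypothesis $\tau$ has Schmidt number at most two. Since $\phi=\sum_j\phi_j$ and all the maps in sight are linear, $(I_n\ox(\phi\circ\phi'))(\r)=(I_n\ox\phi)(\tau)=\sum_j(I_n\ox\phi_j)(\tau)=\sum_j(I_n\ox(\phi_j\circ\phi'))(\r)$, so it is enough to show that each summand $(I_n\ox(\phi_j\circ\phi'))(\r)$ is separable.

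Fix $j$. The three relevant ingredients are: $\phi_j$ is a PPT map whose range lies among the operators on a three-dimensional space; $\phi'$ is completely positive (being PPT); and $(I_n\ox\phi')(\r)=\tau$ has Schmidt number at most two. These match the hypotheses of Theorem~\ref{thm:n=3,stronger} with $\phi_1\leftarrow\phi_j$ and $\phi_2\leftarrow\phi'$, which then yields separability of $(I_n\ox(\phi_j\circ\phi'))(\r)$. For concreteness the mechanism is: write $\tau=\sum_k q_k\proj{a_k}$ with each $\ket{a_k}$ of Schmidt rank at most two; then $(I_n\ox\phi_j)(\proj{a_k})$ is supported on a $2\times 3$ subsystem (two dimensions survive on the first factor, three on the range of $\phi_j$) and is PPT because $\phi_j$ is completely copositive; by the Peres-Horodecki criterion it is separable, and taking the $q_k$-weighted sum over $k$ preserves separability. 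Note that the domains of $\phi_j$ and $\phi'$ never enter --- only the output dimension $3$ does, through the $2\times 3$ Peres-Horodecki step --- so the literal ``$M_3$'' domain appearing in Theorem~\ref{thm:n=3,stronger} need not be matched; if one is fussy, one simply re-runs that short argument.

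To finish, $\sum_j(I_n\ox(\phi_j\circ\phi'))(\r)$ is a finite sum of separable operators (the several three-dimensional ranges of the $\phi_j$ all sit inside the common output space, and separability is closed under sums), hence $(I_n\ox(\phi\circ\phi'))(\r)$ is separable, as claimed. As a consistency check, each $\phi_j$ being PPT forces $\phi=\sum_j\phi_j$ to be PPT, so $\phi\circ\phi'$ is genuinely a composition of PPT maps, in the spirit of Conjecture~\ref{cj:pps}; the argument itself does not need this. I do not expect a real obstacle: the theorem is essentially additivity layered on top of Theorem~\ref{thm:n=3,stronger}. The only point demanding care is the bookkeeping of subspaces --- pushing the ``Schmidt rank at most two'' decomposition of $\tau$ through each $\phi_j$, allowing distinct $\phi_j$ to target distinct three-dimensional subspaces without interference, and checking that the typing ``$\mathbb{C}^{3\times p}\to\mathbb{C}^{3\times 3}$'' is used only through the output being qutrit-dimensional.
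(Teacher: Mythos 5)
Your proposal is correct and follows essentially the same route as the paper: split $(I_n\ox(\phi\circ\phi'))(\r)$ by linearity into the pieces $(I_n\ox(\phi_j\circ\phi'))(\r)$, apply the mechanism of Theorem~\ref{thm:n=3,stronger} (Schmidt-number-two decomposition of $(I_n\ox\phi')\r$, then the $2\times 3$ Peres--Horodecki step on each PPT output of $\phi_j$), and use that a sum of separable states is separable. The paper only sketches this via its explicit $\r_2=\r_3+\r_4$ example, so your version, including the remark that only the three-dimensional \emph{output} of each $\phi_j$ matters and not the literal $M_3$ domain of Theorem~\ref{thm:n=3,stronger}, is a faithful and somewhat more carefully argued rendering of the same idea.
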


%It suffices to consider e.g., $\r=\ket{0,a}+\ket{1,b}$ where $\ket{a},\ket{b}\in\bbC^4$. By performing an invertible operator on $\cH_A$, we can assume that $\braket{b}{3}=0$.

%\\&=&
%2\ketbra{01}{10}+2\ketbra{10}{01}
%+2\ketbra{00}{11}+2\ketbra{11}{00}
%\\&+&
%\ketbra{12}{21}+\ketbra{21}{12}
%+\ketbra{20}{02}+\ketbra{02}{20}
%\\&+&
%\ketbra{03}{30}+\ketbra{30}{03}
%+\ketbra{13}{31}+\ketbra{31}{13}
%\\&+&
%{14\over3}\proj{00}+
%{14\over3}\proj{10}+
%{40\over9}\proj{20}+
%{40\over9}\proj{30}
%\\&+&
%{17\over3}\proj{01}+
%{17\over3}\proj{11}+
%{52\over9}\proj{21}+
%{52\over9}\proj{31}
%\\&+&
%{21\over16}\proj{02}+
%{21\over16}\proj{12}+
%{7\over4}\proj{22}+
%{3\over4}\proj{32}
%\\&+&
%{15\over8}\proj{03}+
%{15\over8}\proj{13}+
%{3\over2}\proj{23}+
%{5\over2}\proj{33}
%\\&=&
%\ketbra{20}{02}+\ketbra{02}{20}
%\\&+&
%\ketbra{13}{31}+\ketbra{31}{13}
%\\&+&
%{5\over3}\proj{00}+
%{8\over3}\proj{10}+
%{40\over9}\proj{20}+
%{31\over9}\proj{30}
%\\&+&
%{11\over3}\proj{01}+
%{14\over3}\proj{11}+
%{43\over9}\proj{21}+
%{52\over9}\proj{31}
%\\&+&
%{21\over16}\proj{02}+
%{5\over16}\proj{12}+
%{3\over4}\proj{22}+
%{3\over4}\proj{32}
%\\&+&
%{7\over8}\proj{03}+
%{15\over8}\proj{13}+
%{3\over2}\proj{23}+
%{3\over2}\proj{33}.

\section{Conclusions}
\label{sec:con}

We have shown that the PPT square conjecture holds for $n=3$ as a consequence of the fact that $3\times3$ PPT states have Schmidt number at most two. That is, every CPTP map in $B(M_3(\mathbb{C}),M_3(\mathbb{C}))$ has index of separability two. (If a linear map is entanglement breaking after finite iterations, the map is said to have a finite index of separability.) We have extended it to a general case for the composition of $\phi_1$ being PPT while $\phi_2$ being CPTP. Further, we have proposed a conjecture as a special case of the PPT conjecture when $n=4$ and the input quantum state $\rho$ is of Schmidt number two. We also have provided a non-trivial concrete example to support the PPT square conjecture when $n=4$. In this case a counterexample is widely believed to exists. The next step for attacking the PPT square conjecture is to investigate more $4\times4$ PPT entangled states by checking their Schmidt number and the relevlant PPT maps.

\section*{Acknowledgements}

LC was supported by the NNSF of China (Grant No. 11871089), Beijing Natural Science Foundation (4173076), and the Fundamental Research Funds for the Central Universities (Grant Nos. KG12040501, ZG216S1810 and ZG226S18C1). 
Yu Yang was supported by Chongqing Technology and Business University Research Fund and Chongqing Key Laboratory of Social Economy and Applied Statistics. Wai-Shing Tang was partially supported by Singapore Ministry of Education Academic Research Fund Tier 1 Grant (No. R-146-000-266-114).

%\opp How to extend this construction to $B(M_4(\mathbb{C}))$?

%\opp Find out a non-trivial example of $4\otimes4$ PPT entangled states $\rho$. Set $\rho$ to be the Choi matrix of a CPTP map $\phi$ and we have $C_{\phi}=\rho$. 

%More specifically, one may consider the following problem.

%\opp Does the inequality $\langle\phi\circ\phi,\phi\rangle\geq0$ hold for all indecomposable CPTP map $\phi\in B(M_4(\mathbb{C}))$.

%\opp 

%\section*{Acknowledgments}

%Yu Yang was supported by Chongqing Technology and Business University Research Fund.


\begin{references}

\bibitem{choi1972}
M. D. Choi, Positive linear maps on $C^*$-algebras, Canad. Math. J., 24:520-529, 1972.

\bibitem{bchw2015}
S. Ba\"uml, M. Christandle, K. Horodecki, and Andreas Winter, Limitations on quantum key repeaters, Nature Communi
cations., 6:6908, 2015.

\bibitem{hhho2005}
Secure key from bound entanglement. Physical Review
Letters., 94(16):160502, 2005.

\bibitem{lg2015}
L. Lami and V. Giovannetti, Entanglementbreaking indices, Journal of Mathematical Physics., 56(9):092201, 2015.

\bibitem{Christhesis}
M. Christandl. Bipartite entanglement: A cryptographic point of view. University of Cambridge., 2005.

\bibitem{cppt}
M. Christandl. Ppt square conjecture. Banff International Research Station workshop: Operator structures in Quantum
Information Theory., 2012.



\bibitem{hsr2003}
M. Horodecki, P. Shor, and M. Ruskai. Entanglement breaking channels. Reviews in Mathematical Physics.,
15(06):629–641, 2003.

\bibitem{kmp2017}
Matthew Kennedy, Nicolas A. Manor, and Vern I. Paulsen. Composition of ppt maps. arXiv:1710.08475, 2017.


\bibitem{rjp2017}
Mizanur Rahaman, Sam Jaques, and Vern I. Paulsen. Eventually entanglement breaking maps. arXiv:1710.08475, 2017.

\bibitem{chw2018}
M. Christandl, M. Hermes and M. Wolf, When Do Composed Maps Become Entanglement Breaking?, arXiv:1807.01266, 2018.

\bibitem{woronowicz1976}
S. L. Woronowicz. Positive maps of low dimensional matrix algebras. Reports on Mathematical Physics., 10(2):165–183,
1976


\bibitem{Collins2018}
B. Collins, Z. Yin and P. Zhong, The PPT square conjecture holds generically for some classes of independent states, arXiv:1803.00143, 2018.

\bibitem{peres1996}
A. Peres, Separability Criterion for Density Matrices, Phys. Rev. Lett. 1413(77), 1996.
	
	
	
\bibitem{ylt16}
Yu Yang, Denny H. Leung and Wai-Shing Tang, All 2-positive linear maps from {$M_3(\mathbb{C})$} to {$M_3(\mathbb{C})$} are decomposable, Linear Algebra and its Applications 233-247 (503), 2016.

\bibitem{hermes}
It was claimed by M\"uller-Hermes in the open problem section of the trimester program of the
Centre Emile Borel ''Analysis in Quantum Information Theory''.

\bibitem{dps2004}
Andrew C. Doherty, Pablo A. Parrilo, and Federico M. Spedalieri. Complete family of separability criteria. Physical Review A, 69(2):022308, 2004.
\end{references}
\end{document}